\def\BibTeX{{\rm B\kern-.05em{\sc i\kern-.025em b}\kern-.08em
    T\kern-.1667em\lower.7ex\hbox{E}\kern-.125emX}}
\newtheorem{theorem}{Theorem}
\numberwithin{theorem}{section}
\newtheorem{lemma}{Lemma}
\newtheorem{assumption}{Assumption}
\newtheorem{corollary}{\textit{Corollary}}
\numberwithin{corollary}{section}
\newtheorem{definition}{Definition}
\newcommand{\IMPC}{{\textsc{InstructMPC}}\xspace}
\newcommand{\removelatexerror}{\let\@latex@error\@gobble}
\begin{document}

\title{\LARGE \bf \IMPC: A Human-LLM-in-the-Loop Framework for Context-Aware Control}

\author{Ruixiang Wu$^{1}$, \ Jiahao Ai$^{2}$, \  Tongxin Li$^{1,*}$%
\thanks{This work was supported by the Guangdong Basic and Applied Basic Research Foundation (No. 2025A1515011311); the National Natural Science Foundation of China (NSFC) under Grant No. 72301234; the `1+1+1' Joint Funding Program (Key Scientific Research Projects); the Guangdong Provincial Key Laboratory of Mathematical Foundations for Artificial Intelligence (No. 2023B1212010001), and the PengCheng Peacock Supporting Scientific Research Fund.}
\thanks{
  Ruixiang  Wu and Tongxin Li are with School of Data Science, The Chinese University of Hong Kong, Shenzhen, Guangdong, 518172, China~(E-mails: {\tt ruixiangwu@link.cuhk.edu.cn, litongxin@cuhk.edu.cn})
}
\thanks{Jiahao Ai is with School of Mathematical Sciences, Peking University, Beijing, 100871, China~(E-mail: {\tt 2100010637@stu.pku.edu.cn})} 
}

\maketitle
\thispagestyle{empty}
\pagestyle{empty}

\begin{abstract}
Model Predictive Control~(MPC) is a powerful control strategy widely utilized in domains like energy management, building control, and autonomous systems. However, its effectiveness in real-world settings is challenged by the need to incorporate context-specific predictions and expert instructions, which traditional MPC often neglects. We propose \IMPC, a novel framework that addresses this gap by integrating real-time human instructions through a Large Language Model~(LLM) to produce context-aware predictions for MPC. Our method employs a Language-to-Distribution~(L2D) module to translate contextual information into predictive disturbance trajectories, which are then incorporated into the MPC optimization. Unlike existing context-aware and language-based MPC models, \IMPC enables dynamic human-LLM interaction and fine-tunes the L2D module in a closed loop with theoretical performance guarantees, achieving a regret bound of $O(\sqrt{T\log T})$ for linear dynamics when optimized via advanced fine-tuning methods such as Direct Preference Optimization~(DPO) using a tailored loss function.

\end{abstract}

\section{Introduction}

Model Predictive Control~(MPC) is a versatile control strategy widely applied in domains such as energy management~\cite{10886828, East_2018,li2021learning,li2025learning}, building control~\cite{drgovna2020all}, and autonomous driving~\cite{batkovic2023experimental, 10886377}. These applications leverage MPC’s ability to optimize system's behavior over a predictive horizon, making it a cornerstone of modern control engineering.

In real-world scenarios, however, the predictions used in MPC are not solely derived from static models. Instead, they often depend on concrete, context-specific scenarios or are determined by experts to be near-optimal. For example, in building energy management of a university campus, MPC must anticipate and adapt to scenarios such as impending paper submission deadlines causing computational load surges, scheduled public events increasing demand, and forecasted weather changes affecting PV generation. This context dependency underscores a critical question: 
“\textit{How can we incorporate in-context instructions into MPC?}” 
Key challenges in addressing this issue revolve around adapting to evolving human instructions and ensuring generalizability in changing environments. The MPC system must be capable of dynamically adjusting its behavior in response to shifting human inputs and contexts, which demands a high level of flexibility and responsiveness to maintain effectiveness. Furthermore, the learning-based integration of these instructions must be designed to adapt to time-varying conditions, allowing the system to not only respond to immediate changes but also to generalize its learning effectively, ensuring consistent performance over time. This combination of adaptability and robust generalization is essential for overcoming the complexities presented by dynamic human interactions and fluctuating environmental factors.

To tackle these challenges, we introduce \IMPC, a novel framework that employs a human-LLM-in-the-loop approach to form a closed-loop system. \IMPC takes in-context instructions from humans, processes them through a Large Language Model (LLM) to generate context-aware predictions, and feeds these into the downstream MPC module. During online implementation, the model loss from MPC is propagated back to the LLM, which is fine-tuned using techniques such as Proximal Policy Optimization (PPO)~\cite{schulman2017proximalpolicyoptimizationalgorithms} and Direct Preference Optimization (DPO)~\cite{rafailov2023direct} to enhance adaptability and generalizability across scenarios.

Recent advancements in MPC have explored contextual and language-based enhancements to improve performance in autonomous systems and personalized applications.
\paragraph{Context-aware MPC for autonomous systems} In autonomous systems, context-aware MPC leverages environmental and semantic information to enhance control performance. For instance, semantically informed MPC~\cite{goel2023semantically} employs deep neural networks to encode semantic labels (e.g., ``find a couch'') into cost maps, guiding navigation in unknown environments with continuous control, achieving an improvement in success rates through mid-level visual cues. However, it focuses on discrete, predefined contexts without real-time human input. Similarly,~\cite{frohlich2022contextual} adapts MPC parameters to environmental contexts (e.g., rain) using contextual Bayesian optimization. While effective for model and objective tuning, it lacks mechanisms for incorporating real-time instructions from humans. Additionally, designed for crowded environments, the framework in~\cite{stefanini2024efficient} integrates contextual cues like human poses and activities into MPC. It prioritizes computational efficiency but does not generalize to in-context instructions beyond navigation tasks.

\paragraph{Language-Based MPC Personalization} Language-based approaches harness natural language to personalize MPC systems for improved usability and flexibility. For instance, the LanguageMPC in~\cite{sha2023languagempc} utilizes LLMs as decision-makers in autonomous driving, combining LLM reasoning with low-level controllers to enhance safety, efficiency, and generalizability in complex scenarios, though it does not support real-time human interaction. Meanwhile, ChatMPC~\cite{miyaoka2024chatmpc} uses BERT to classify natural language into a set of predefined circumstances, each linked to a predefined modification of MPC parameters; however, it is constrained by its predefined updating rule, and also lacks state-of-the-art human-LLM interaction and theoretical performance guarantees.


The \IMPC framework integrates real-time human instructions through a closed-loop human-LLM system for dynamic adaptation and provides a theoretical performance guarantee for linear dynamics using DPO, ensuring features absent in existing context-aware and language-based approaches. Our contributions are two-fold:
\begin{enumerate}
    \item \textbf{\IMPC Framework}: \IMPC dynamically tunes MPC by integrating real-time, human-provided in-context instructions through a Language-to-Distribution (L2D) module (e.g., LLMs) and fine-tuning the L2D module in a closed loop. The novelty lies in enabling the MPC controller to leverage contextual information often ignored in traditional control models. By transforming high-level instructive information into predicted  disturbance trajectories, \IMPC enhances the adaptability of classic MPC.
    \item \textbf{Theoretical Guarantees}: For linear dynamics, we provide theoretical performance (Theorem~\ref{thm:main}) guarantees for closed-loop LLM fine-tuning using Direct Preference Optimization (DPO). Unlike traditional online optimization methods that directly minimize regret, our approach decouples the DPO loss from the regret objective, which relies on MPC cost information and is challenging to compute online due to unknown future disturbances. 
    In particular, Theorem~\ref{thm:main} suggests a tailored loss function for LLM fine-tuning, achieving a regret bound of $O(\sqrt{T \log T})$ as derived in Corollary~\ref{cor}.
\end{enumerate}

\section{Preliminaries and Problem Formulation} \label{sec:problem}

Denote $[T]\coloneqq \{0,1,\ldots,T-1\}$. We consider a finite horizon linear dynamical system
\begin{equation}\label{eq:linear sys}
    x_{t + 1} = A x_t + B u_t + w_t, \quad t\in [T]\
\end{equation}
where $x_t \in \mathbb{R}^n, u_t \in \mathbb{R}^m$ denote the system state and control input at each time $t\in [T]$; $A \in \mathbb{R}^{n \times n}$ and $B \in \mathbb{R}^{n \times m}$ are system matrices. The disturbance $w_t \in \mathbb{R}^n$ is unknown to the controller at time $t\in [T]$.  The controller's goal is to minimize the following quadratic costs:
\begin{subequations}
\begin{align}
\label{eq:quadratic_costs}
J^{\star}\coloneqq\min_{(u_t:t\in [T])}&\sum_{t=0}^{T-1}(x_t^{\top}Qx_t+u_t^{\top}Ru_t)+x_T^{\top}Px_T,\\
\label{eq:sys_constraints}
&\text{subject to }\eqref{eq:linear sys},
\end{align}
\end{subequations}
where $Q, R \succ 0$ are positive definite matrices, and $P$ is the solution to the following discrete algebraic Riccati equation (DARE),\footnote{\scriptsize To simplify the presentation of our main results, we fix the terminal cost in~\eqref{eq:quadratic_costs} to be $P$. The arguments extend to more general terminal costs as well, since the overall cost only differs by an $O(1)$ term.}
\begin{equation} \label{eq:dare}
    P = Q+ A^{\top}PA - A^{\top}PB(R+B^{\top}PB)^{-1}B^{\top}PA.
\end{equation}
The disturbance $w_t$ is bounded by a constant $W>0$~(i.e. $w_t\in\mathcal{W}\coloneqq \{w:\|w\|\leq W\}$)  for all $t\in [T]$.
Futhermore, the system $(A,B)$ is stabilizable~\cite{dullerud2013course}.
\begin{assumption}\label{asp:mpc}
There exists $K\in\mathbb{R}^{m\times n}$ such that $A-BK$ has a spectral radius $\rho$ less than $1$.\footnote{\scriptsize Thus, the Gelfand’s formula implies that there exist $C>0$, $\rho\in (0,1)$ such that $\|A-BK\|^{t}\leq C\rho^t$ for all $t\geq 0$. }
\end{assumption}

\subsection{Model Predictive Control} 

Fix a prediction horizon to be an integer $k$ and define  $\mathcal{T} \coloneqq \min\{t + k - 1, T - 1\}$.
We consider the setting in which, at each time $t\in [T]$, $k$-step predictions 
$
\hat{w}_{t:\mathcal{T}|t} \coloneqq (\hat{w}_{t|t}, \ldots, \hat{w}_{\mathcal{T}|t})\in\mathcal{W}^k
$ corresponding to future disturbances $(w_{t},\ldots,w_{\mathcal{T}})$
are available. 

In this context, model predictive control naturally incorporates these predictions into its optimization framework. At each time $t\in [T]$, the controller solves the following optimization problem given the current state $x_t$ and disturbance predictions in $\hat{w}_{t:\mathcal{T}|t}$:
\begin{align} \label{eq:mpc_formulation}
    u_{t:\mathcal{T}}^{\textsc{MPC}} &\coloneqq  \arg\min_{u} \Big (\sum_{\tau = t}^{\mathcal{T}} (x_{\tau}^{\top} Q x_{\tau} + u_{\tau}^{\top} R u_{\tau}) + x^{\top}_{\mathcal{T} + 1} P x_{\mathcal{T} + 1}\Big) \notag\\
    & \text{s.t. } x_{\tau + 1} = A x_{\tau} + B u_{\tau} + \hat{w}_{\tau|t}, \forall t \leq \tau < \mathcal{T}.
\end{align}
Only the first element of $u_{t:\mathcal{T}}^{\textsc{MPC}}$ is applied as a control input to the system, and the rests are discarded. 
However, in many real-world scenarios, system disturbances are hard to predict~\cite{binder2019improved, castillo2020predictingfuturestatedisturbed}. In this paper, we propose a novel framework that leverages contextual information to predict future disturbances and incorporates these predictions into the decision-making process. 

\subsection{Contextual MPC with Model Fine-Tuning}
Our goal is to improve the performance of the classic MPC defined in~\eqref{eq:mpc_formulation} by incorporating external contextual information at time $t\in [T]$, denoted by $c_t\in\mathcal{C}$ to predict future disturbances $\hat{w}_{t:\mathcal{T}|t}\in\mathcal{W}^k$.\footnote{\scriptsize The predicted disturbance trajectory $\hat{w}_{t:\mathcal{T}|t}$ is generated in real-time, operating on the same time scale as the MPC decision-making process.} Hence, the main challenge is to integrate a Neural Network (NN) $g_{\theta}: \mathcal{C} \to \mathcal{W}^k$ to detokenize the contextual information into predictions for the downstream MPC, and refine these predictions by fine-tuning the model parameter $\theta\in\Theta$ after the real disturbances are revealed.  Let $\|\cdot\|$ denote the $\ell_2$-norm of a vector. The hypothesis set $\Theta$ has bounded diameter so that $\|\theta-\theta'\|\leq D$ for all $\theta,\theta'\in \Theta $ and some constant $D>0$. 
We impose the following assumptions on the NN model $g_{\theta}$, which serve as approximations for widely used architectures, including transformers~\cite{vaswani2017attention} and input convex neural networks~\cite{amos2017input}.
\begin{assumption}\label{asp:convex}
The NN model $g_\theta$ is differentiable and affine in $\theta$ and its gradient satisfies $\|\nabla_\theta g_\theta\|\leq L$ for some constant $L>0$.
\end{assumption}
A hyper-parameter $\theta_t$ is updated at  $t \in [T]$. 
We evaluate performance via the following \emph{regret}, casting the problem as an online optimization over $\theta_{1:T}\coloneqq (\theta_t : t \in [T])$:
\begin{align}
    \label{eq:regret}
    J(\theta_{1:T}) - \min_{\theta \in \Theta} J(\theta),
\end{align}
where $J(\theta_1,\ldots,\theta_{T})$ and $J(\theta^{\star})$ are the quadratic costs defined in~\eqref{eq:quadratic_costs} induced by the MPC actions in~\eqref{eq:mpc_formulation} corresponding to the hyper-parameters $\theta_1,\ldots,\theta_{T}$ and the optimal hyper-parameter $\theta^{\star}$, respectively. In other words, the regret in~\Cref{eq:regret} measures the difference between the cumulative loss incurred by $(\theta_t : t \in [T])$, and the cumulative loss incurred by the optimal $\theta^{\star}$ in hindsight. Next, we introduce the \IMPC framework for contextual MPC.

\section{InstructMPC} 
\label{sec:system}

In this section, we introduce \IMPC, a framework that enables MPC controller to leverage external contextual information to effectively manage environmental disturbances. Fig.~\ref{fig:system_framework} presents the overall architecture of \IMPC. In the following subsections, we explain the core component of \IMPC, Language to Distribution~(L2D) module in details.

\begin{figure}[h]
    \centering
    \includegraphics[width = 0.9\linewidth]{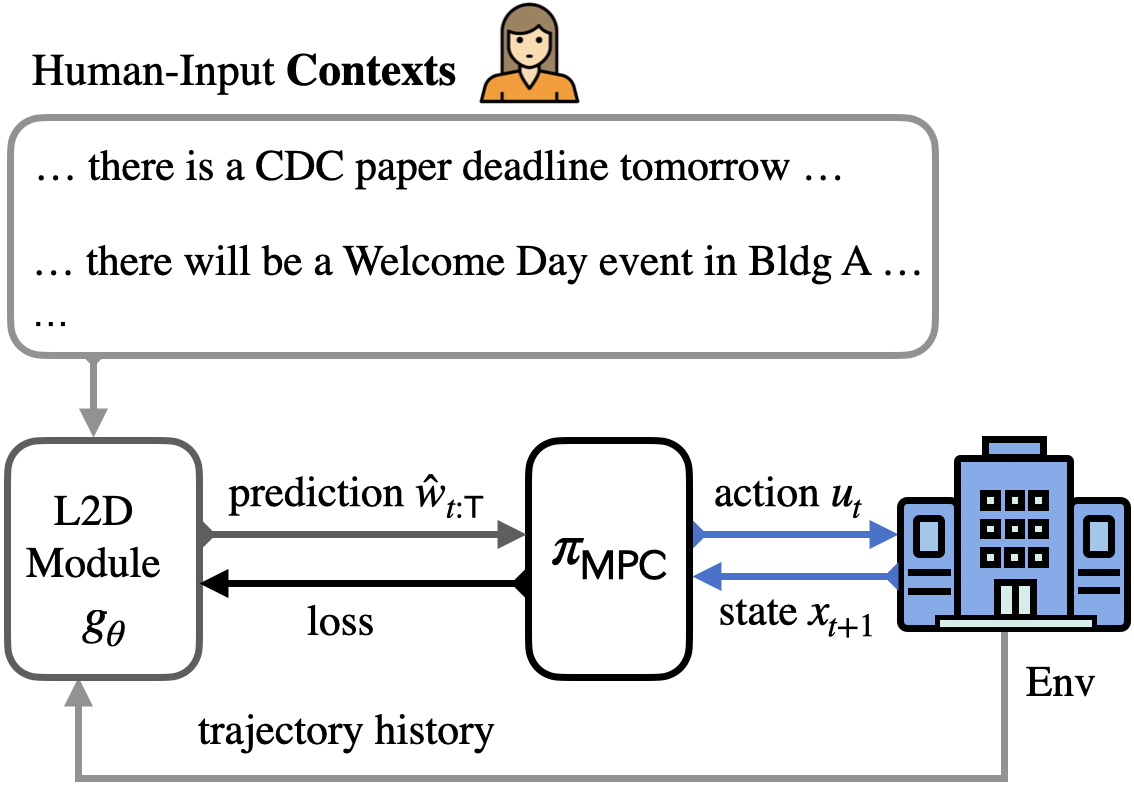}
    \caption{System framework of \IMPC. The blue lines represent interactions between \IMPC and the environment where \IMPC receives the state $x_t$, and outputs the control input $u_t$. The black lines represents the \textit{information loop}, within which external contextual information is passed to the L2D module $g_{\theta}$ to produce predicted disturbances $\hat{w}_{t:\mathcal{T}|t}$. Then, the MPC controller $\pi_{\mathrm{MPC}}$ utilizes $\hat{w}_{t:\mathcal{T}|t}$ and the current state $x_t$ to determine a control input $u_{t}^{\mathrm{MPC}}$ via~\eqref{eq:mpc_formulation}. After executing the MPC control input $u_{t}^{\mathrm{MPC}}$, the environment reveals the true disturbance $w_t$. The discrepancy between $w_t$ and $\hat{w}_t$ is then sent back to the L2D module as a loss signal.}
    \label{fig:system_framework}
\end{figure}

\subsection{Language to Distribution~(L2D) Module} \label{sec:l2d}
During the execution of \IMPC, the L2D module $g_{\theta}$ predicts future environmental disturbances $\hat{w}_{t:\mathcal{T}|t} = (\hat{w}_{t|t},\ldots ,\hat{w}_{\mathcal{T}|t})$ based on contextual information. After each control step, the module refines its parameters $\theta$ by comparing these predictions with the actual revealed disturbances.

At each time $t$, given contextual information $c_t \in \mathcal{C}$, the module first assigns a probability $p(s | c_t)$ to each scenario $s \in \mathcal{S}$. For each scenario $s$, the module retrieves a disturbance trajectory $w^{s}_{t:\mathcal{T}} = (w^s_{t|t}, \ldots, w^s_{\mathcal{T}|t})$ based on historic data. The predicted disturbance trajectory is computed by taking a weighted combination of trajectories $(w^{s}_{t:\mathcal{T}} : s \in \mathcal{S})$:
\begin{equation*}
    \hat{w}_{t:\mathcal{T}|t} = \sum_{s \in \Omega} p(s | c_t)  w^{s}_{t:\mathcal{T}}.
\end{equation*}

\subsection{\IMPC Framework} 

\begin{algorithm}[h]
\label{alg:main}
\caption{\IMPC}
\label{alg:lqr}

\For{$t = 0, \dots, T-1$}{
$\mathcal{T} \leftarrow \min(t + k - 1, T - 1)$\\
\textbf{Get predictions} from $g_{\theta}$ with instruction $c_t$:
\vskip 0.2em
$(\hat{w}_{t|t}, \ldots, \hat{w}_{\mathcal{T}|t})\gets g_{\theta_t}(c_t)$
    \vskip 0.2em
    Generate an action $u_t$ by solving MPC in~\eqref{eq:mpc_formulation}
    \vskip 0.2em 
    \textbf{Update} 
    \begin{align*}
        x_{t+1} &= A x_t + B u_t + w_t\\
        \theta_{t+1} &= \theta_t-\eta_t\nabla_\theta L_{t-k+1}(\theta_{t-k+1})
    \end{align*}
}

\end{algorithm}

The L2D module is a neural network $g_{\theta}: \mathcal{C} \to \mathcal{W}^k$, which is updated at every time $t$ through a loss function $L_{t-k+1}:\Theta\times\mathcal{W}^{2k}\rightarrow\mathbb{R}$, which also depends on disturbances $w_{t:\mathcal{T}}$ and predictions $\hat{w}_{t:\mathcal{T}|t}$. An example of such a loss function can be found in Corollary~\ref{cor} in Section~\ref{sec:main}. At every time $t \in [T]$, the parameter $\theta$ of L2D module is updated according to:
\begin{equation} \label{eq:theta_update}
    \theta_{t+1}=\theta_t-\eta_t\nabla_\theta L_{t-k+1}(\theta_{t-k+1}),
\end{equation}
where $\eta_t\nabla_\theta L_{t-k+1}$ denotes the gradient of $L_{t-k+1}$ with respect to $\theta$ and $\eta_t$ is a time-varying learning rate.
The gradient descent rule above is widely used to fine-tune NN models, such as Direct Preference Optimization~(DPO)~\cite{rafailov2023direct}, Proximal Policy Optimization~(PPO)~\cite{schulman2017proximalpolicyoptimizationalgorithms}, and Supervised Fine-Tuning~(SFT)~\cite{ouyang2022traininglanguagemodelsfollow}.
The overall procedure of \IMPC is summarized in Algorithm~\ref{alg:main}. Since $\nabla_\theta L_t(\theta_{t})$ is available only after $(w_t,\ldots,w_{\mathcal{T}})$ are revealed at time $t+k$, a delayed gradient is used in the update rule~\eqref{eq:theta_update}. For convenience, we define $L_{t-k+1}=0$ for $t<k$. In practice, when $t<k$, the NN model does not update due to insufficient observed data.


\section{Theoretical Guarantee}
\label{sec:main}

Our main result in this section explores the online optimization of $(\theta_t:t\in [T])$ in the online fine-tuning process. We make the following standard assumption on the loss function $L_t$.
\begin{assumption}\label{asp:lg}
 The gradient $\nabla_\theta L_t(\theta)$ is uniformly bounded, i.e., there exists $G>0$ such that by $\|\nabla_\theta L_t(\theta)\|\leq G$ for all $\theta\in\Theta$.
\end{assumption}



A fundamental challenge is that the surrogate loss function used to train $g_\theta$ does not match the true, regret-optimal objective~\eqref{eq:mpc_formulation}, which is unknown at training time. This misalignment creates the primary difficulty, i.e., ensuring that training the upstream L2D module $g_\theta$ actually improves the downstream MPC controller's performance. Our solution is to define and analyze the \textit{loss discrepancy}, a term that quantifies the divergence between the gradient of the L2D module surrogate loss $L$ and that of the true decision loss in~\Cref{eq:mpc_formulation}.


\begin{definition}
\label{def:df}
The loss discrepancy between two  loss functions $L_1(\theta)$ and $L_2(\theta)$, $\emph{\texttt{LD}}(L_1,L_2)$ is defined as 
\begin{equation*}
    \emph{\texttt{LD}}(L_1,L_2) \coloneqq \sup_{\theta \in \Theta}\left\|\nabla_\theta L_1(\theta)-\nabla_\theta L_2(\theta)\right\|.
\end{equation*}
\end{definition}

The theorem below provides a regret bound~(see~\Cref{eq:regret}) for the proposed \IMPC, which interacts with the dynamic system described via~\Cref{eq:linear sys}.
\begin{theorem}\label{thm:main}
    Under Assumption \ref{asp:convex},\ref{asp:lg}, if the learning rate $\eta_t$ is non-increasing, then
\begin{align}
\nonumber
    J(\theta_{1:T}) -& J(\theta^{\star})
    \leq \frac{D^2}{2\eta_{T-1}}+\left(k-\frac{1}{2}\right)G^2\sum_{t=0}^{T-1}\eta_t\\
    \label{eq:loss_discrepancy_bound}
    &+D\sum_{t=0}^{T-1}\emph{\texttt{LD}}(L_t,\psi_t^\top H\psi_t)+(k-1)GD.
\end{align}
Furthermore, if we choose $\eta_t=\frac{D}{G\sqrt{2(2k-1)(t+1)}}$,
\begin{align*}
    J(\theta_{1:T})-&J(\theta^{\star})\leq 2GD\sqrt{(k-\frac{1}{2})T}\\
    &+D\sum_{t=0}^{T-1}\emph{\texttt{LD}}\left(L_t,\psi_t^\top H\psi_t\right)+(k-1)GD,
\end{align*}
where $H$ is defined in Lemma~\ref{lem:regret}, and we define
\begin{equation*}
    \psi_t(\theta) \coloneqq \sum_{\tau=t}^{T-1}\left(F^\top\right)^{\tau-t}Pw_\tau-\sum_{\tau=t}^{\mathcal{T}}\left(F^\top\right)^{\tau-t}Pg_\theta^{(\tau-t+1)}(c_t).
\end{equation*}
\end{theorem}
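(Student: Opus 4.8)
The plan is to reduce the MPC regret in~\eqref{eq:regret} to a standard online convex optimization problem and then run a delayed online gradient descent (OGD) analysis that simultaneously absorbs the mismatch between the surrogate loss $L_t$ and the true per-step loss. First I would invoke Lemma~\ref{lem:regret} to rewrite the realized cost, up to an additive $O(1)$ constant, as $J(\theta_{1:T})=\sum_{t=0}^{T-1} f_t(\theta_t)+\text{const}$, where $f_t(\theta)\coloneqq\psi_t(\theta)^\top H\psi_t(\theta)$. Since $g_\theta$ is affine in $\theta$ by Assumption~\ref{asp:convex}, each $\psi_t$ is affine and each $f_t$ is a convex quadratic (using $H\succeq 0$ from Lemma~\ref{lem:regret}). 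Taking $\theta^\star$ as the fixed in-hindsight minimizer of $\sum_t f_t$, the regret becomes exactly $\sum_{t=0}^{T-1}\bigl(f_t(\theta_t)-f_t(\theta^\star)\bigr)$, which is the OCO regret of the loss sequence $\{f_t\}$ against a fixed comparator.

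Next, by convexity I would linearize, $f_t(\theta_t)-f_t(\theta^\star)\le\langle\nabla f_t(\theta_t),\theta_t-\theta^\star\rangle$, and split the gradient into the surrogate part and a discrepancy part, $\nabla f_t(\theta_t)=\nabla L_t(\theta_t)+\bigl(\nabla f_t(\theta_t)-\nabla L_t(\theta_t)\bigr)$. The discrepancy part is controlled by Cauchy–Schwarz and the diameter bound $\|\theta_t-\theta^\star\|\le D$, giving $\langle\nabla f_t(\theta_t)-\nabla L_t(\theta_t),\theta_t-\theta^\star\rangle\le D\,\texttt{LD}(L_t,\psi_t^\top H\psi_t)$ by Definition~\ref{def:df}; summing over $t$ produces the $D\sum_t\texttt{LD}(L_t,\psi_t^\top H\psi_t)$ term in~\eqref{eq:loss_discrepancy_bound}.

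For the surrogate part $\sum_t\langle\nabla L_t(\theta_t),\theta_t-\theta^\star\rangle$, I would carry out the delayed-OGD telescoping. The update~\eqref{eq:theta_update} applies $\nabla L_s(\theta_s)$ at step $s+k-1$ rather than at step $s$, so the natural telescoping identity $\langle\nabla L_s(\theta_s),\theta_{s+k-1}-\theta^\star\rangle=\frac{\|\theta_{s+k-1}-\theta^\star\|^2-\|\theta_{s+k}-\theta^\star\|^2}{2\eta_{s+k-1}}+\frac{\eta_{s+k-1}}{2}\|\nabla L_s(\theta_s)\|^2$ is expressed at the wrong iterate. I would correct this by bounding the shift $\langle\nabla L_s(\theta_s),\theta_s-\theta_{s+k-1}\rangle\le\|\nabla L_s(\theta_s)\|\,\|\theta_s-\theta_{s+k-1}\|$, where the path-length bound $\|\theta_s-\theta_{s+k-1}\|\le\sum_{i=s}^{s+k-2}\eta_i\|\nabla L\|\le(k-1)\eta_sG$ follows from Assumption~\ref{asp:lg} and the monotonicity of $\eta_t$. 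Adding the $\frac{\eta_t}{2}\|\nabla L_t\|^2\le\frac{\eta_t}{2}G^2$ terms to these $(k-1)\eta_tG^2$ correction terms yields the coefficient $(k-\tfrac12)G^2\eta_t$, while telescoping the squared-distance terms gives $\frac{D^2}{2\eta_{T-1}}$ (again using $\eta_t$ non-increasing), and the first/last $k-1$ steps whose gradients are never fully applied contribute the residual $(k-1)GD$ via the crude bound $GD$. This establishes~\eqref{eq:loss_discrepancy_bound}. Finally, substituting $\eta_t=\frac{D}{G\sqrt{2(2k-1)(t+1)}}$, using $\eta_{T-1}^{-1}=\frac{G}{D}\sqrt{2(2k-1)T}$ and $\sum_{t=0}^{T-1}(t+1)^{-1/2}\le 2\sqrt{T}$, collapses the first two terms into $2GD\sqrt{(k-\tfrac12)T}$, yielding the second bound (and, with a loss whose discrepancy is controlled, the $O(\sqrt{T\log T})$ rate of Corollary~\ref{cor}).

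The main obstacle I anticipate is the bookkeeping where the delayed gradients intertwine with the surrogate-versus-true-loss split: the discrepancy $\texttt{LD}(L_t,\psi_t^\top H\psi_t)$ is naturally measured at $\theta_t$, whereas the updates consume gradients evaluated at $\theta_{t-k+1}$, so the telescoping indices and the path-length corrections must be aligned precisely to recover exactly the constants $(k-\tfrac12)$ and $(k-1)$ rather than looser multiples of $k$.
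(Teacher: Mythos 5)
Your proposal is correct and follows essentially the same route as the paper's proof: reduce the regret to $\sum_t\bigl(\psi_t(\theta_t)^\top H\psi_t(\theta_t)-\psi_t(\theta^\star)^\top H\psi_t(\theta^\star)\bigr)$ via Lemma~\ref{lem:regret}, linearize by convexity and affineness of $g_\theta$, and then split into the delayed-OGD telescoping term, the $k$-step delay correction, and the loss-discrepancy term, recovering the constants $(k-\tfrac12)$ and $(k-1)$ exactly as in Appendix~A. The bookkeeping you flag as the main obstacle is resolved in the paper by the explicit three-way decomposition in~\eqref{eq:decomposed_regret}, which matches your gradient split and shift correction term for term.
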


\begin{proof}
    
By Theorem 3.2 in~\cite{yu2021powerpredictionsonlinecontrol}, given predictions $\hat{w}_{t:\mathcal{T}|t}$, the MPC 
 solution $u_t^{\textsc{MPC}}$ to the problem defined in~\eqref{eq:mpc_formulation} is
    \begin{equation*}
        u_t^{\textsc{MPC}} = -(R+B^{\top}PB)^{-1} B^{\top}\Big(PAx_t + \sum_{\tau = t}^{\mathcal{T}} (F^{\top})^{\tau - t} P \hat{w}_{\tau|t}\Big ),
    \end{equation*}
where $F \coloneqq A - B(R + B^{\top}PB)^{-1}B^{\top}PA \coloneqq A - BK$.
Thus, applying Lemma \ref{lem:regret} twice, we obtain
    \begin{equation*}
        J(\theta_{1:T})-J(\theta^{\star})=\sum_{t=0}^{T-1}\psi_t(\theta_t)^\top H\psi_t(\theta_t)-\psi_t(\theta^{\star})^\top H\psi_t(\theta^{\star}),
    \end{equation*}
where $\psi_t(\theta)$ is defined in Theorem~\ref{thm:main}. For notational simplicity, we define $\psi_t\coloneqq \psi_t(\theta_t)$ and $\phi_t\coloneqq \psi_t(\theta^{\star})$.
By our model assumption, $Q,R\succ 0$ implies $H\succeq 0$, using the convexity of $x^\top Hx$,
\begin{align} 
    &\frac{1}{2}\sum_{t=0}^{T-1}\psi_t^\top H\psi_t-\phi_t^\top H\phi_t \leq\sum_{t=0}^{T-1}\psi_t^\top H(\psi_t-\phi_t)  =\nonumber\\
    \nonumber
    &\sum_{t=0}^{T-1}\psi_t^\top H \left(\sum_{\tau=t}^{\mathcal{T}}\left(F^\top\right)^{\tau-t}P\left(g_{\theta^{\star}}^{(\tau-t+1)}(c_t)-g_{\theta_t}^{(\tau-t+1)}(c_t)\right)\right).
\end{align} 

Applying Assumption~\ref{asp:convex}, since $g_{\theta}$ is affine in $\theta$, continuing from above,
\begin{align}
\nonumber
&J(\theta_{1:T})-J(\theta^{\star})\leq\\
&-2\sum_{t=0}^{T-1}\psi_t^\top H \left(\sum_{\tau=t}^{\mathcal{T}}\left(F^\top\right)^{\tau-t}P\nabla_\theta g_{\theta_t}^{(\tau-t+1)}(c_t)^\top\right)(\theta_t-\theta^{\star})\notag \\
&=\sum_{t=0}^{T-1}\nabla_{\theta_t}(\psi_t^\top H \psi_t)^\top(\theta_t-\theta^{\star}),
\label{eq:linear regret}
\end{align} 
where we denote $\nabla_{\theta_t}(\psi_t^\top H \psi_t)\coloneqq\nabla_{\theta_t}(\psi_t^\top H \psi_t)|_{\theta=\theta_t}$, and have used Assumption \ref{asp:convex}. 
Denote the gradient of the loss function as $\nabla_\theta L_{t-k+1}(\theta_{t-k+1})$ as $l_{t-k+1}$. It follows that the RHS of~\eqref{eq:linear regret} can be rewritten as 
\begin{align}
\nonumber
&\sum_{t=0}^{T-1}\nabla_{\theta_t}(\psi_t^\top H \psi_t)^\top(\theta_t-\theta^{\star}) \notag\\
        =&\sum_{t=0}^{T-1}l_{t-k+1}^\top(\theta_t-\theta^{\star})+\sum_{t=0}^{T-1}(l_t-l_{t-k+1})^\top(\theta_t-\theta^{\star}) \notag\\
        &\quad + \sum_{t=0}^{T-1}(\nabla_{\theta_t}(\psi_t^\top H \psi_t)^\top-l_t^\top)(\theta_t-\theta^{\star}).\label{eq:decomposed_regret}
\end{align}
The bound in~\eqref{eq:loss_discrepancy_bound} is obtained by bounding the terms in~\Cref{eq:decomposed_regret} separately, and the details are relegated to Appendix~\ref{app:proof_main}.
\end{proof}

Furthermore, for a wide range of loss functions $(L_t:t\in [T])$, as examplified in Corollary \ref{cor} below, they exhibit a bounded discrepancy from the MPC cost with the discrepancy decaying exponentially as $k$ increases, i.e., $\emph{\texttt{LD}}(L_t,\psi_t^\top H\psi_t)\leq C\rho^k$. 

\begin{corollary}\label{cor} Under Assumption \ref{asp:mpc}, for the case when $L_t(\theta)=\hat{\psi}_t(\theta)^\top H\hat{\psi}_t(\theta)$ with
\[
    \hat{\psi}_t(\theta) \coloneqq \sum_{\tau=t}^{\mathcal{T}}\left(F^\top\right)^{\tau-t}Pw_\tau-\sum_{\tau=t}^{\mathcal{T}}\left(F^\top\right)^{\tau-t}Pg_\theta^{(\tau-t+1)}(c_t),
\]
there exist constants $C$ and $0<\rho<1$ such that 
    \[
    \emph{\texttt{LD}}(L_t,\psi_t^\top H\psi_t)\leq C\rho^k, \ \text{ for all } t\in [T],
    \]
    where
   $\psi_t(\theta)$ is defined in Theorem~\ref{thm:main}.
\end{corollary}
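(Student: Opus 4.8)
The plan is to prove the corollary by directly comparing the two loss functions $L_t(\theta)=\hat\psi_t(\theta)^\top H\hat\psi_t(\theta)$ and $\psi_t(\theta)^\top H\psi_t(\theta)$, whose difference stems entirely from the first summation in each: $\hat\psi_t$ truncates the true-disturbance sum at $\mathcal{T}=\min\{t+k-1,T-1\}$, whereas $\psi_t$ extends it all the way to $T-1$. First I would set $\Delta_t \coloneqq \psi_t(\theta)-\hat\psi_t(\theta) = \sum_{\tau=\mathcal{T}+1}^{T-1}\left(F^\top\right)^{\tau-t}Pw_\tau$, noting that $\Delta_t$ is independent of $\theta$. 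When $t+k-1\geq T-1$ this sum is empty and the discrepancy is zero, so the only nontrivial case is $\mathcal{T}=t+k-1$, for which every exponent $\tau-t$ in $\Delta_t$ is at least $k$.

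Next I would compute the gradient difference. Writing $L_t=\hat\psi_t^\top H\hat\psi_t$ and $\psi_t^\top H\psi_t=(\hat\psi_t+\Delta_t)^\top H(\hat\psi_t+\Delta_t)$, the $\theta$-gradients differ only through the cross term, since $\Delta_t^\top H\Delta_t$ is constant in $\theta$. Using that $g_\theta$ is affine with $\|\nabla_\theta g_\theta\|\leq L$ (Assumption~\ref{asp:convex}), the gradient of each quadratic has the form $2\left(\nabla_\theta\hat\psi_t\right)^\top H\,(\cdot)$, so
\begin{equation*}
    \nabla_\theta\!\left(\psi_t^\top H\psi_t\right)-\nabla_\theta L_t
    = 2\left(\nabla_\theta\hat\psi_t(\theta)\right)^\top H\,\Delta_t,
\end{equation*}
because $\nabla_\theta\psi_t=\nabla_\theta\hat\psi_t$ (the extra terms in $\psi_t$ carry no $\theta$-dependence). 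Taking norms and the supremum over $\theta\in\Theta$ yields $\texttt{LD}(L_t,\psi_t^\top H\psi_t)\leq 2\|H\|\,\sup_\theta\|\nabla_\theta\hat\psi_t(\theta)\|\,\|\Delta_t\|$.

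Then I would bound each factor. The factor $\sup_\theta\|\nabla_\theta\hat\psi_t\|$ is controlled by a finite geometric sum of $\|F^\top\|^{\tau-t}\|P\|\,L$ over $\tau=t,\ldots,\mathcal{T}$, which is $O(1)$ uniformly in $t$ and $k$ under Assumption~\ref{asp:mpc} (since $F=A-BK$ has spectral radius $\rho<1$, Gelfand's formula gives $\|F^\tau\|\leq C\rho^\tau$, so the sum is bounded by $\tfrac{C\|P\|L}{1-\rho}$). The decisive factor is $\|\Delta_t\|$: bounding it by $\sum_{\tau=\mathcal{T}+1}^{T-1}\|F^\top\|^{\tau-t}\|P\|W\leq \|P\|W\sum_{j\geq k}C\rho^{j}=\tfrac{C\|P\|W}{1-\rho}\,\rho^{k}$, using $w_\tau\in\mathcal{W}$ and that every exponent is at least $k$. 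Combining the three bounds collapses everything into a single constant times $\rho^k$, giving the claimed $\texttt{LD}(L_t,\psi_t^\top H\psi_t)\leq C'\rho^k$ uniformly in $t$, after absorbing $\|H\|$, $\|P\|$, $W$, $L$, and the $(1-\rho)^{-1}$ factors into a redefined constant $C'$.

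The main obstacle I anticipate is purely bookkeeping rather than conceptual: ensuring that the constant $C$ and rate $\rho$ are genuinely uniform over all $t\in[T]$ and independent of the horizon $T$. The geometric-sum bound on $\|\Delta_t\|$ must be summed to infinity (not to $T-1$) so the resulting constant does not grow with $T$, and I must verify the boundary case $\mathcal{T}=T-1$ separately to confirm it only strengthens the bound. One subtlety worth flagging is the reuse of the symbol $\rho$: Gelfand's formula in Assumption~\ref{asp:mpc} already provides a $\rho\in(0,1)$ with $\|F^t\|\leq C\rho^t$, and the exponential decay in $k$ inherits exactly this $\rho$, so no new spectral quantity needs to be introduced — the same $\rho$ serves both the per-term decay and the final $\rho^k$ rate.
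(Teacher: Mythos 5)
Your proposal is correct and follows essentially the same route as the paper's own proof: identify that $\nabla_\theta\psi_t=\nabla_\theta\hat\psi_t$, reduce the loss discrepancy to $2\|H\|\,\|\nabla_\theta\hat\psi_t\|\,\|\psi_t-\hat\psi_t\|$, handle the boundary case $\mathcal{T}=T-1$ separately, and bound the tail sum geometrically with every exponent at least $k$. If anything, your use of Gelfand's formula ($\|F^\tau\|\leq C\rho^\tau$) is slightly more careful than the paper's bound, which writes $\|F\|^{\tau-t}$ and implicitly assumes $\|F\|<1$ rather than only that the spectral radius is below one.
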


For the particular choice of $L_t$ in Corollary~\ref{cor}, the term 
$
D\sum_{t=0}^{T-1}{\texttt{LD}}(L_t,\psi_t^\top H \psi_t)
$
in the bound~\eqref{eq:loss_discrepancy_bound} of Theorem~\ref{thm:main} simplifies to $CDT\rho^k$. This provides an insightful guideline for selecting $k$. Specifically, by setting 
$
k = \log_{1/\rho} T,
$
we can achieve a regret bound of $O(\sqrt{T\log T})$.


\section{Numerical Experiment} \label{sec:experiment}
In this section, we present a detailed demonstration of how the L2D module is fine-tuned in Subsection~\ref{sec:finetune}. We implement the L2D module with LLaMa-8B~\cite{touvron2023llamaopenefficientfoundation}, a causal language model. Causal language models predict the next token in a sequence and return prediction scores for each vocabulary token, which aligns well with L2D’s requirement of generating scenario probabilities. To obtain these probabilities, we use the following prompt template: ``\textit{\texttt{[task description]}. Given the current context: \texttt{[context description]}, the expected scenario is \texttt{[MASK]}.}'' For a predefined set of scenarios~($\mathcal{S}$ in Subsection~\ref{sec:l2d}), we evaluate the prediction score assigned by the model by inserting each candidate token into the \textit{\texttt{[MASK]}} position. These scores are then normalized using a softmax function to produce a probability distribution.

In Subsection~\ref{sec:context_impact} and Subsection~\ref{sec:real_world_app}, we demonstrate the effectiveness of \IMPC framework in two scenarios. The first is a robot tracking task, where an agent is required to follow an unknown trajectory. In this setting, the target position is revealed only immediately before the control decision is made. Moreover, environmental disturbances, such as wind force, are described textually and passed through the L2D module to assist the MPC controller. We plot the resulting trajectory to directly demonstrate the impact of contextual information. The second scenario is a real-world energy management task, in which human-provided instructions, such as news and scheduled appointments, are used to predict external disturbances. In this case, we further extend \IMPC~to operate on a nonlinear dynamical system, we also adopt a piecewise linear cost instead of the quadratic cost defined in~\cref{eq:quadratic_costs}, demonstrating the generalizability of \IMPC in diverse system dynamics and different cost functions.

\subsection{Fine-Tuning L2D Module} \label{sec:finetune}

\textbf{Fine-Tuning L2D Module.} We first define the preferences among scenarios based on observed disturbances. Given the the prediction horizon $k$, and $\mathcal{T} \coloneqq \min(t + k - 1, T - 1)$, after each $k$ control steps, the true disturbance sequence $w_{t:\mathcal{T}}$ is revealed. For each scenario $s \in \mathcal{S}$, we measure the distance between $w_{t:\mathcal{T}}$ and $w_{t : \mathcal{T}}^s$ as:
\begin{equation*}
    \mathrm{dist}(w_{t:\mathcal{T}}, w_{t : \mathcal{T}}^s) \coloneqq \| w_{t:\mathcal{T}} - w_{t : \mathcal{T}}^s\|.
\end{equation*}
Given the context $c_t$ at time $t$, we say scenario $s_1$ is more likely to occur than scenario $s_2$, denoted as $s_1 \succ s_2$, if $\mathrm{dist}(w_{t:\mathcal{T}}, w_{t : \mathcal{T}}^{s_1}) < \mathrm{dist}(w_{t:\mathcal{T}}, w_{t : \mathcal{T}}^{s_2})$. This yields a dataset of pairwise comparisons:
\begin{equation*}
    \mathcal{D} = \{(c_t, s_{w,t}, s_{l,t}) | t \in \mathcal{T}_{\text{obs}}; s_w, s_l \in \mathcal{S}; s_i \neq s_j\},
\end{equation*}
where $\mathcal{T}_{\text{obs}}\in \{1, \ldots, T\}$ denotes the subset of time step at which observations and comparisons are collected, $s_{w,t}$ and $s_{l,t}$ denote the preferred and dispreferred scenario at time $t$ respectively. In practice, the parameter adjustment loop operates at a different timescale than the control loop, we tune the L2D module in batches. During both the robot tracking and energy management experiments, the L2D module is fine-tuned using DPO~\cite{rafailov2023direct} once the size of pairwise comparison dataset \(\mathcal{D}\) reaches a predefined threshold. Specifically, the parameter \(\theta\) is updated when \(|\mathcal{D}| = 240\) in the robot tracking task and \(|\mathcal{D}| = 500\) in the energy management scenario.

\subsection{Application 1: Robot Tracking} \label{sec:context_impact}
\textbf{Problem description.} We examine a 2D robotic tracking scenario~\cite{Li_2022, yu2021powerpredictionsonlinecontrol} where a controller follows an unknown, fixed astroid path, which is 
\begin{equation*}
    y_t\coloneqq
    \begin{bmatrix}
2\sin^3 (t/38.2) \\
2\cos^3(t/38.2)
\end{bmatrix}, \quad t\in [T].
\end{equation*}
Let $p_t\in \mathbb{R}^2$ and $v_t\in \mathbb{R}^2$ denote the location and the velocity of robot at time $t$. $p_{t+1}$ depends on previous location $p_{t}$ and velocity $v_t$ such that $p_{t+1}=p_t+0.2v_t$. The velocity is influenced by both the control input $u_t$ and the external disturbances, specifically the wind force represented by $Z_t$. Let $x_t\coloneqq q_t-y_t$ represent the deviation between the robot's trajectory and the target trajectory. The system dynamics can be represented by:
\begin{equation*}
    \begin{bmatrix}
x_{t+1} \\
v_{t+1}
\end{bmatrix}
=
A\begin{bmatrix}
x_{t} \\
v_{t}
\end{bmatrix}
+Bu_t+w_t,
\end{equation*}
where
\begin{equation*}
    A\coloneqq
    \begin{bmatrix}
        1 & 0 &0.2 &0 \\
        0 &1 &0 &0.2\\
        0 &0 &1 &0\\
        0 &0 &0 &1
    \end{bmatrix},
    B\coloneqq
    \begin{bmatrix}
        0 & 0  \\
        0 &0 \\
        0.2 &0 \\
        0 &0.2
    \end{bmatrix},
\end{equation*}
\begin{align*}
    &w_t\coloneqq Ay_t-y_{t+1}+ Z_t, Z_t\coloneqq\begin{bmatrix}
        0 \! \! \!  &0 \! \! \! 
 &-0.2Z_{t}^{(1)} \! \! \!  &-0.2Z_{t}^{(2)}
    \end{bmatrix}^\top\! \! \!,
\end{align*}
where $Z_t$ is a random variable representing the random wind force. Specifically, during the experiment, we randomly select a subset of $[T]$, denote as $T_{\text{sub}} = \{20, 25, 32, 35, 40, 84, 133, 145, 158,215\}$, and let
\begin{equation*}
    \begin{cases}
        Z^{(1)}_t, Z^{(2)}_t \sim \mathcal{U}(-45, 45) & t \in T_{\text{sub}}\\
        Z^{(1)}_t, Z^{(2)}_t \sim \mathcal{U}(-2, 2) & \text{otherwise}
    \end{cases}
\end{equation*}

To track the trajectory, the controller uses quadratic cost in~\Cref{eq:quadratic_costs} with
\begin{equation*}
    Q\coloneqq
    \begin{bmatrix}
        1 & 0 &0 &0 \\
        0 &1 &0 &0\\
        0 &0 &0 &0\\
        0 &0 &0 &0
    \end{bmatrix}, \ \ 
    R\coloneqq
    \begin{bmatrix}
        10^{-2} & 0\\
        0 &10^{-2}
    \end{bmatrix}.
\end{equation*}

\textbf{Experimental results.}
In our first experiment, we plot the robot trajectory to demonstrate the impact of external contextual information, in Fig.~\ref{fig:robot_tracking}, a zoom-in figure is used to demonstrate the exact impact of instructions,
We compare three settings:
\begin{enumerate}
    \item \textit{Classic MPC}: An MPC baseline using \eqref{eq:mpc_formulation} where disturbance predictions are set to zero (\(\hat{w}_{t:\mathcal{T}|t} = 0\)) for all \(t\), equivalent to a Linear Quadratic Regulator (LQR) for the problem~\eqref{eq:quadratic_costs}-\eqref{eq:sys_constraints}.
    \item \textit{Zero-Shot Prompting}: MPC with predictions generated by a zero-shot prompted LLaMA-8B~\cite{touvron2023llamaopenefficientfoundation} model.
    \item \textit{\IMPC (with fine-tuning)}: Our proposed method using LLaMA-8B~\cite{touvron2023llamaopenefficientfoundation} fine-tuned in a closed-loop with DPO~\cite{rafailov2023direct}.
\end{enumerate}

We observe that the baseline~(classic MPC without L2D predictions) tends to deviate from the trajectory due to its inability to predict disturbances from external information. In contrast, both methods with L2D predictions improve the performance of the robot, with the fine-tuned version demonstrating a closer robot path to the designed trajectory.

\begin{figure*}[ht]
    \centering
    \vspace{2mm}\includegraphics[width=0.8\linewidth]{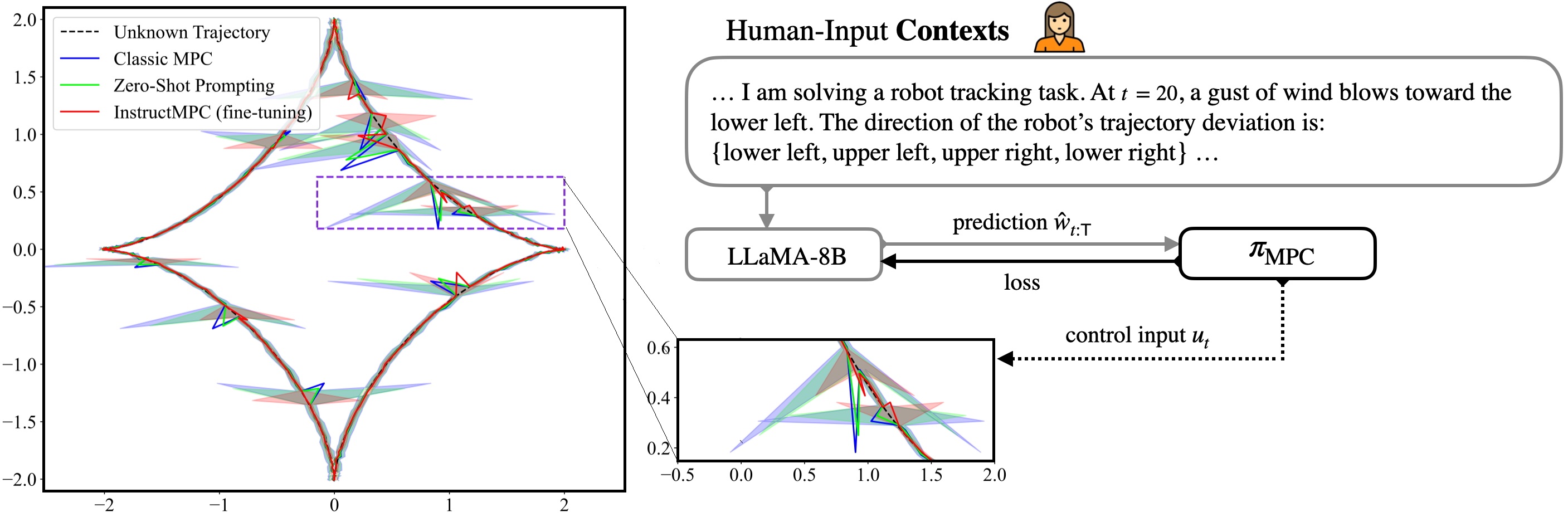}
    \caption{\textbf{Application 1:} \textit{Robot Tracking (\Cref{sec:context_impact})}. Average robot tracking trajectories of three methods: 1. MPC in~\eqref{eq:mpc_formulation} with predictions $\hat{w}_{t:\mathcal{T}|t}=0$ for all $t$ (classic MPC); 2. MPC with predictions from a zero-shot prompting LLaMA-8B~\cite{touvron2023llamaopenefficientfoundation} model; 3. \IMPC with LLaMA-8B~\cite{touvron2023llamaopenefficientfoundation} fine-tuned in a closed-loop. For each method, we ran $20$ episodes, and plot the average robot trajectory. The shadow region denotes the variance of trajectories.}
    \label{fig:robot_tracking}
\end{figure*}

\subsection{Application 2: Building Energy Management} \label{sec:real_world_app}
\textbf{Problem description.}
We consider the problem of energy management for a campus building. The system consists of a battery for energy storage and a series of photovoltaic~(PV) arrays to harvest the sunlight. Let $x_t \in [0, 1]$ denote the state of charge~(SoC) of the battery at time $t$, and $u_t \in [0, 1]$ represent the charging rate, $u_t = 1$ corresponds to charging at the maximum allowable power, and $u_t = 0$ indicates no charging. The nonlinear system dynamics can be represented by:
$
    x_{t+1} = f(x_t, u_t) + w_t,
$
where \( f: \mathbb{R}^2 \to \mathbb{R} \) captures the system’s inherent battery dynamics, $w_t \in \mathbb{R}$ represents the external disturbance, which can be attributed to the variations in solar irradiance or fluctuations in electricity demand. These disturbances are inferred from contextual information, including weather forecasts, human activity schedules, and natural language instructions. Examples of such contextual information include: ``\textit{The CDC conference deadline is approaching, and many people are running experiments,}'' or ``\textit{I will be fine-tuning my LLaMA-1B model at 3 PM today, which will take approximately two hours.}'' The objective of the energy management system is to minimize the total cost of purchasing electricity from the grid. This cost is computed using a piecewise linear function that reflects real-world peak and valley electricity prices.

\begin{figure*}[ht]
    \centering
    \vspace{2mm}
\includegraphics[width= 0.9\linewidth]{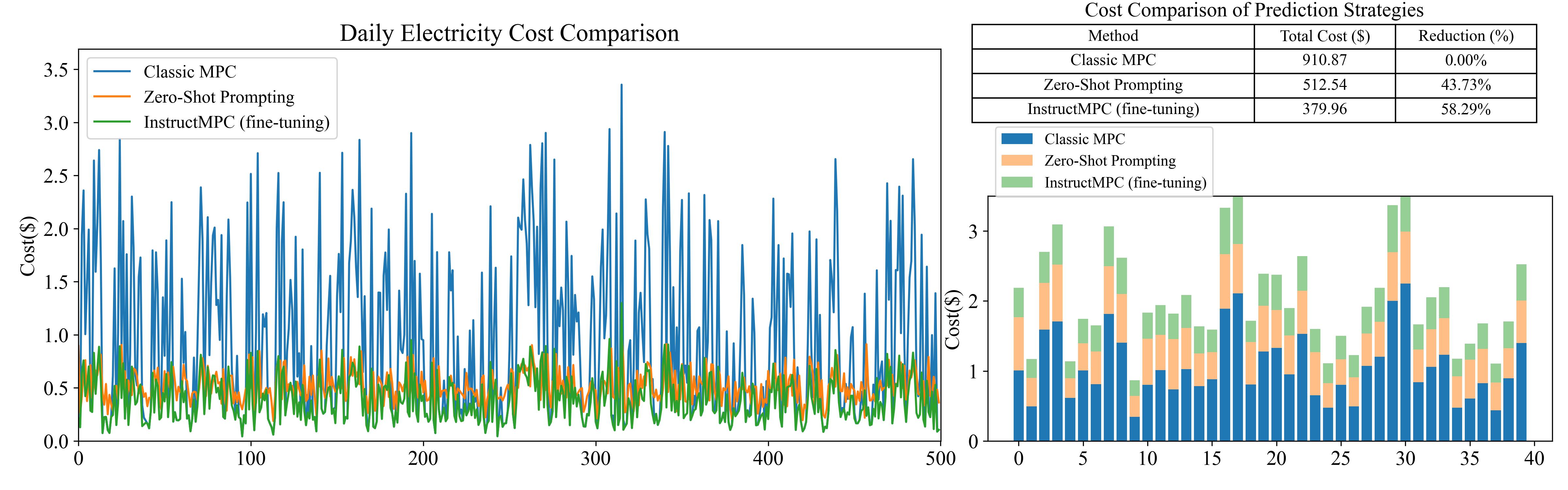}
\caption{\textbf{Application 2:} \textit{Building Energy Management (\Cref{sec:real_world_app})}. Daily electricity cost comparison of three approaches: 1. MPC in~\eqref{eq:mpc_formulation} with predictions $\hat{w}_{t:\mathcal{T}|t}=0$ for all $t$ (classic MPC); 2. MPC with predictions from a zero-shot prompting LLaMA-8B~\cite{touvron2023llamaopenefficientfoundation} model; 3. \IMPC with LLaMA-8B~\cite{touvron2023llamaopenefficientfoundation} fine-tuned in a closed-loop. }
    \label{fig:energy_management}
\end{figure*}

\textbf{Experimental results.}
We show the performance of \IMPC\ in an energy management scenario using a synthetic workload and real-world weather data. Each energy-consuming task arrives at a constant rate of $1$ task per unit time, and its energy demand is drawn from a uniform distribution \(\mathcal{U}(1, 5)\), with each unit corresponding to $0.5$~kWh. For PV generation, we randomly sample $500$ days of real weather data and model the corresponding PV generation using PVLIB simulator~\cite{anderson2023pvlib}, and simulate three different MPC settings described above.

The results are presented in Fig.~\ref{fig:energy_management}. The left part shows the daily electricity costs over $500$ days for all three methods, demonstrating the effectiveness of incorporating contextual predictions and model parameter adjustment. The upper-right part provides a summary table comparing the total electricity cost and the corresponding percentage reduction, while the bottom-right part provides a closer view of daily costs over $40$ days, further illustrating the cost savings by \IMPC.

\section{Conclusion}

We have presented the \IMPC framework that integrates real-time human instructions into Model Predictive Control (MPC) via a human-LLM-in-the-loop approach. This work demonstrates how leveraging contextual information can significantly improve MPC’s predictive accuracy and performance in complex, real-world applications. By dynamically generating context-aware disturbance predictions and refining them using advanced techniques like Proximal Policy Optimization (PPO) and Direct Preference Optimization (DPO), \IMPC offers improved adaptability and performance across a wide range of applications. For linear dynamics, we established a robust performance guarantee, proving that IMPC achieves a regret bound of $O(\sqrt{T\log T})$. In future work, we aim to extend these results to nonlinear systems, and plan to conduct more experiments under different key parameters, such as the prediction horizon $k$ and the choice of LLM. We will also deploy the \IMPC framework on our newly deployed physical energy testbed OpenCEM~\cite{10.1145/3679240.3734678} to assess real-world generalization beyond synthetic workloads.


\bibliography{biblography}
\bibliographystyle{ieeetr}


\makeatletter
\renewcommand\appendix{
  \par
  \setcounter{section}{0}
  \setcounter{subsection}{0}
  \gdef\thesection{\@Alph\c@section}
}
\makeatother

\appendix
 
\vspace{-10pt}

\subsection*{\bf Appendix A: Proof of Theorem~\ref{thm:main}}
\label{app:proof_main}


The following lemma characterizes the quadratic cost gap.

\begin{lemma}[Lemma 13 in~\cite{yu2022competitivecontroldelayedimperfect}]\label{lem:regret}
For any $\psi_t\in \mathbb{R}^n$, if at each time $t\in [T]$, a controller $\pi$ implements a control input
$
        u_t^{\pi}=-(R+B^\top PB)^{-1}B^\top\big(PAx_t+\sum_{\tau=t}^{T-1}\left(F^\top\right)^{\tau-t}Pw_\tau-\psi_t\big),
$
    then the difference between the optimal cost $J^{\star}$ and the algorithm cost $J(\pi)$ is given by $
       J(\pi)-J^{\star} = \sum_{t=0}^{T-1}\psi_t^\top H\psi_t,$
where $H \coloneqq B(R+B^\top PB)^{-1}B^\top$ and $F \coloneqq A-HPA$.
\end{lemma}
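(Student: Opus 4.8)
The plan is to prove the cost-difference identity by the standard dynamic-programming (value-function) telescoping argument, exploiting the fact that the controller $\pi$ in the statement is precisely the clairvoyant optimal controller perturbed additively by the feedforward term $\psi_t$. Introduce the clairvoyant value function $V_t(x)$, the optimal cost-to-go from state $x$ at time $t$ given that all future disturbances $w_t,\dots,w_{T-1}$ are known. Running backward induction on the Bellman recursion $V_t(x)=\min_u\big(x^\top Q x+u^\top R u+V_{t+1}(Ax+Bu+w_t)\big)$ with terminal condition $V_T(x)=x^\top P x$, I would show that $V_t(x)=x^\top P x+2b_t^\top x+c_t$ for a vector $b_t$ and scalar $c_t$ depending on the future disturbances. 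The crucial point is that the quadratic part is the \emph{constant} DARE matrix $P$ from~\eqref{eq:dare}: this is where the DARE fixed-point property and the choice of terminal cost $P$ are used, forcing the Riccati iterate to stay at $P$ for every $t$. Minimizing the quadratic-in-$u$ Bellman right-hand side yields the optimal law $u_t^\star(x)=-(R+B^\top PB)^{-1}B^\top(PAx+Pw_t+b_{t+1})$, and unrolling the recursion for $b_t$ identifies the feedforward term $Pw_t+b_{t+1}$ with $\sum_{\tau=t}^{T-1}(F^\top)^{\tau-t}Pw_\tau$, matching the lemma's controller at $\psi_t=0$. In particular $V_0(x_0)=J^{\star}$.

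Next I would perform the completion of squares. Writing $Q_t(x,u)\coloneqq x^\top Q x+u^\top R u+V_{t+1}(Ax+Bu+w_t)$ and $M\coloneqq R+B^\top PB$, the map $u\mapsto Q_t(x,u)$ is quadratic with Hessian $2M$ and minimizer $u_t^\star(x)$, so the exact identity $Q_t(x,u)=V_t(x)+(u-u_t^\star(x))^\top M(u-u_t^\star(x))$ holds. The lemma's controller satisfies $u_t^\pi=u_t^\star(x_t)+M^{-1}B^\top\psi_t$, whence $(u_t^\pi-u_t^\star(x_t))^\top M(u_t^\pi-u_t^\star(x_t))=\psi_t^\top BM^{-1}B^\top\psi_t=\psi_t^\top H\psi_t$, which recovers exactly $H=B(R+B^\top PB)^{-1}B^\top$.

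Finally I would telescope along the trajectory $(x_t)$ generated by $\pi$. Evaluating the completion-of-squares identity at $u=u_t^\pi$ and the actual next state $x_{t+1}=Ax_t+Bu_t^\pi+w_t$ gives $x_t^\top Q x_t+(u_t^\pi)^\top R u_t^\pi=V_t(x_t)-V_{t+1}(x_{t+1})+\psi_t^\top H\psi_t$. Summing over $t\in[T]$ collapses the value-function differences to $V_0(x_0)-V_T(x_T)$; adding the terminal cost $x_T^\top P x_T=V_T(x_T)$ then yields $J(\pi)=V_0(x_0)+\sum_{t=0}^{T-1}\psi_t^\top H\psi_t=J^{\star}+\sum_{t=0}^{T-1}\psi_t^\top H\psi_t$, which is the claim.

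The main obstacle is the first step: rigorously establishing that the value function is exactly quadratic with the \emph{constant} Hessian $P$, and identifying the linear-coefficient recursion so that the Bellman minimizer coincides term by term with the closed-form controller in the statement. This requires verifying the algebraic cancellations (using $R=M-B^\top PB$ and $F^\top=A^\top-A^\top PH$) that collapse the cross terms involving $b_{t+1}$ and $w_t$ into the claimed $F$-power feedforward. Once the constancy of the Hessian $M$ and this identification are in place, the completion of squares and the telescoping are entirely mechanical.
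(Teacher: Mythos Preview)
Your approach is correct and is the standard value-function telescoping argument for this type of result. Note, however, that the paper does not supply its own proof of this lemma: it is quoted verbatim as Lemma~13 of the cited reference and used as a black box in the proof of Theorem~\ref{thm:main}. There is therefore nothing in the present paper to compare your argument against; the dynamic-programming derivation you outline (constant Riccati Hessian $P$ enforced by the DARE fixed point and terminal cost, identification of the feedforward $Pw_t+b_{t+1}=\sum_{\tau=t}^{T-1}(F^\top)^{\tau-t}Pw_\tau$ via $b_t=F^\top(Pw_t+b_{t+1})$, completion of squares with Hessian $M=R+B^\top PB$, and telescoping) is precisely how such lemmas are established in the LQR-with-disturbances literature, including the cited source.
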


Now, we bound the terms in~\Cref{eq:decomposed_regret} separately:
\begin{align}
\nonumber
&\sum_{t=0}^{T-1}\nabla_{\theta_t}(\psi_t^\top H \psi_t)^\top(\theta_t-\theta^{\star}) \notag\\
        =&\underbracket{\sum_{t=0}^{T-1}l_{t-k+1}^\top(\theta_t-\theta^{\star})}_{\text{(a)}}+\underbracket{\sum_{t=0}^{T-1}(l_t-l_{t-k+1})^\top(\theta_t-\theta^{\star})}_{\text{(b)}} \notag\\
        &\quad + \sum_{t=0}^{T-1}(\nabla_{\theta_t}(\psi_t^\top H \psi_t)^\top-l_t^\top)(\theta_t-\theta^{\star}).
        \nonumber
\end{align}
First, based on the gradient update rule in~\eqref{eq:theta_update} of \IMPC,
 \begin{align}
&\|\theta_{t+1}-\theta^{\star}\|^2 \notag \\
=& \|\theta_t-\eta_t l_{t-k+1}-\theta^{\star}\|^2 \notag\\
= & \|\theta_t-\theta^{\star}\|^2-2\eta_t l_{t-k+1}^\top(\theta_t-\theta^{\star})+\eta_t^2\|l_{t-k+1}\|^2.
\label{eq:bound_theta_1}
\end{align}
Then, rearranging the terms in~\Cref{eq:bound_theta_1} and noting that the gradients are bounded by Assumption~\ref{asp:lg},
    \begin{align*}
        2l_{t-k+1}^\top(\theta_t-\theta^{\star})\leq \frac{\|\theta_t-\theta^{\star}\|^2-\|\theta_{t+1}-\theta^{\star}\|^2}{\eta_t}+\eta_tG^2.
    \end{align*}
Summing above from $t=0$ to $T-1$, for (a),
    \begin{align}
        2&\sum_{t=0}^{T-1}l_{t-k+1}^\top(\theta_t-\theta^{\star}) \notag\\
        \leq &\sum_{t=0}^{T-1}\frac{\|\theta_t-\theta^{\star}\|^2-\|\theta_{t+1}-\theta^{\star}\|^2}{\eta_t}+\eta_tG^2 \notag\\
        \leq &\sum_{t=0}^{T-1}\|\theta_t-\theta^{\star}\|^2\left(\frac{1}{\eta_t}-\frac{1}{\eta_{t-1}}\right)+G^2\sum_{t=0}^{T-1}\eta_t \notag\\
        \leq &\frac{D^2}{\eta_{T-1}}+G^2\sum_{t=0}^{T-1}\eta_t .\label{eq: bound1}
    \end{align}

Furthermore, the second term in~\Cref{eq:decomposed_regret} can be bounded from above as
    \begin{align}
      \text{(b)}=  &\sum_{t=0}^{T-1}(l_t-l_{t-k+1})^\top(\theta_t-\theta^{\star}) \notag\\
      =&\sum_{t=0}^{T-k}l_t^\top(\theta_t-\theta_{t+k-1})+\sum_{t=T-k+1}^{T-1}l_t^\top(\theta_t-\theta^{\star}) \notag\\
        \leq &\sum_{t=0}^{T-k}(G\|\theta_t-\theta_{t+k-1}\|) +(k-1)GD \notag\\
        =&(k-1)GD+G\sum_{t=0}^{T-k}\left\|\sum_{\tau=t}^{t+k-2}\eta_{\tau}l_{\tau-k+1}\right\| \notag\\
        \leq& (k-1)GD+G^2\sum_{t=0}^{T-k}\sum_{\tau=t}^{t+k-2}\eta_{\tau} \notag\\
        \leq &(k-1)GD+(k-1)G^2\sum_{t=0}^{T-1}\eta_t. \label{eq:bound2}
    \end{align}
Recalling \eqref{eq:linear regret}, 
    \begin{align}
        &J(\theta_{1:T})-J(\theta^{\star}) \notag\\
        \leq &\frac{D^2}{2\eta_{T-1}}+\frac{G^2}{2}\sum_{t=0}^{T-1}\eta_t +(k-1)GD+(k-1)G^2\sum_{t=0}^{T-1}\eta_t \notag\\
        +&\sum_{t=0}^{T-1}\|\nabla_{\theta_t}(\psi_t^\top H \psi_t)-l_t\|\|\theta_t-\theta^{\star}\| \label{eq:-1}\\ 
        \leq &\frac{D^2}{2\eta_{T-1}}+(k-\frac{1}{2})G^2\sum_{t=0}^{T-1}\eta_t+(k-1)GD \notag\\
        +&D\sum_{t=0}^{T-1}\emph{\texttt{LD}}(L_t,\psi_t^\top H\psi_t),\label{eq:fin}
    \end{align}
where we have used~\eqref{eq: bound1} and \eqref{eq:bound2} to derive~ \eqref{eq:-1}; the last inequality~\eqref{eq:fin} holds by the loss discrepancy in~\Cref{def:df}. Finally, applying $\eta_t={D}/\left({G\sqrt{2(2k-1)(t+1)}}\right)$ to \eqref{eq:fin},
    \begin{align}
        &J(\theta_{1:T})-J(\theta^{\star}) \notag\\
        = &GD\sqrt{(k-\frac{1}{2})T} +\sqrt{k-\frac{1}{2}}\frac{GD}{2}\sum_{t=0}^{T-1}\frac{1}{\sqrt{t+1}}\notag\\
        +&D\sum_{t=0}^{T-1}\emph{\texttt{LD}}(L_t,\psi_t^\top H\psi_t)+(k-1)GD \notag\\
        \leq &2GD\sqrt{(k-\frac{1}{2})T}+D\sum_{t=0}^{T-1}\emph{\texttt{LD}}(L_t,\psi_t^\top H\psi_t)+(k-1)GD. \nonumber
    \end{align}
The inequality above in the second statement of Theorem~\ref{thm:main} is obtained by using $\sum_{t=1}^T\frac{1}{\sqrt{t}}\leq 2\sqrt{T}$.


\subsection*{\bf Appendix B: Proof of Corollary~\ref{cor}}

Suppose $L_t(\theta)=\hat{\psi}_t(\theta)^\top H\hat{\psi}_t(\theta)$ where
\[
    \hat{\psi}_t(\theta) \coloneqq \sum_{\tau=t}^{\mathcal{T}}\left(F^\top\right)^{\tau-t}Pw_\tau-\sum_{\tau=t}^{\mathcal{T}}\left(F^\top\right)^{\tau-t}Pg_\theta^{(\tau-t+1)}(c_t).
\]

By Definition~\ref{def:df},
  \begin{align}
        \emph{\texttt{LD}}(L_t,\psi_t^\top H\psi_t)&=\left\|\frac{\partial (\psi_t^\top H \psi_t)}{\partial \theta}-\frac{\partial (\hat\psi_t^\top H \hat\psi_t)}{\partial \theta}\right\| \notag \\
        &=\left\|2(\psi_t-\hat \psi_t)^\top H \frac{\partial  \psi_t}{\partial \theta}\right\| 
        \label{eq:LFB_0}
        \\
        &\leq 2\|H\|\left\|\psi_t-\hat \psi_t\right\|\left\|\frac{\partial  \psi_t}{\partial \theta}\right\|.\label{eq:LFD}
    \end{align}
We have use the fact that $\frac{\partial  \psi_t}{\partial \theta}=\frac{\partial  \hat\psi_t}{\partial \theta}$ to get~\eqref{eq:LFB_0}. Note that if $\mathcal{T}=T-1$, then $\psi_t=\hat \psi_t$, $\emph{\texttt{LD}}(L_t,\psi_t^\top H \psi_t)=0$. Thus, it suffices to restrict our analysis to the case when $\mathcal{T}=t+k-1$. It follows that
\begin{align}
    \left\|\psi_t-\hat \psi_t\right\|&=\left\|\sum_{\tau=t+k}^{T-1}\left(F^\top\right)^{\tau-t}Pw_\tau\right\| \notag \\
        &\leq W\|P\|\sum_{\tau=t+k}^{T-1}\|F\|^{\tau-t} \label{eq:psi-hpsi_0} \\
        &\leq W\frac{\|P\|\|F\|^k}{1-\|F\|}, \label{eq:psi-hpsi}
    \end{align}
where~\eqref{eq:psi-hpsi_0} follows from the Assumption \ref{asp:mpc} so that $\|w_t\|\leq W$ for all $t\in [T]$. Moreover, by Assumption \ref{asp:convex}, $\|\nabla_\theta g_{\theta_t}^{(\tau-t+1)}(c_t)\|\leq L$,
\begin{align}
    \left\|\frac{\partial  \psi_t}{\partial \theta}\right\|&=\left\|\sum_{\tau=t}^{t+k-1}\left(F^\top\right)^{\tau-t}P\nabla_\theta g_{\theta_t}^{(\tau-t+1)}(c_t)\right\| \notag\\
        &\leq L\|P\|\sum_{\tau=t}^{t+k-1}\|F\|^{\tau-t} \notag \leq L\frac{\|P\|}{1-\|F\|}.\label{eq:nablapsi}
    \end{align}

Finally, combining \eqref{eq:LFD},\eqref{eq:psi-hpsi}, and~\eqref{eq:nablapsi} we have
    \begin{align}
        \emph{\texttt{LD}}(L_t,\psi_t^\top H\psi_t)\leq 2\frac{LW\|P\|^2\|H\|\|F\|^k}{(1-\|F\|)^2}.
    \end{align}

\end{document}